\newcommand{\tikznode}[2]{%
\ifmmode%
\tikz[remember picture,baseline=(#1.base),inner sep=0pt] \node (#1) {$#2$};%
\else
\tikz[remember picture,baseline=(#1.base),inner sep=0pt] \node (#1) {#2};%
\fi}
\newcommand{\argmin}{\operatorname{argmin}}
\newcommand{\rank}{\operatorname{rank}}
\newcommand{\sgn}{\operatorname{sgn}}
\newcommand{\glt}{\operatorname{GL}(K,n)^3}
\newcommand{\gl}{\operatorname{GL}(K,n)}
\renewcommand{\phi}{\varphi}
\begin{document}
\bibliographystyle{splncs04}

\title{A Normal Form for Matrix Multiplication Schemes}
\author{Manuel Kauers\thanks{M.K. was supported by the Austrian Science Fund (FWF) grant P31571-N32.}\orcidID{0000-0001-8641-6661} \and
  Jakob Moosbauer\thanks{J.M. was supported by the Land Oberösterreich through the LIT-AI Lab.}\orcidID{0000-0002-0634-4854}}
\institute{
  Institute for Algebra, Johannes Kepler University, Linz, Austria\\\relax
  \{manuel.kauers,jakob.moosbauer\}@jku.at
}

\maketitle

\begin{abstract}
	Schemes for exact multiplication of small matrices have a large symmetry group. This group defines an equivalence relation on the set of multiplication schemes. There are algorithms to decide whether two schemes are equivalent. However, for a large number of schemes a pairwise equivalence check becomes cumbersome. In this paper we propose an algorithm to compute a normal form of matrix multiplication schemes. This allows us to decide pairwise equivalence of a larger number of schemes efficiently.
\end{abstract}

\section{Introduction}

Computing the product of two $n \times n$ matrices using the straightforward algorithm costs $O(n^3)$ operations. Strassen found a multiplication scheme that allows to multiply two $2 \times 2$ matrices using only 7 multiplications instead of 8 \cite{St:Gein}. This scheme can be applied recursively to compute the product of $n\times n$ matrices in $O(n^{\log_2 7})$ operations. This discovery lead to a large amount of research on finding the smallest $\omega$ such that two $n\times n$ matrices can be multiplied using at most $O(n^\omega)$ operations. The currently best known bound is $\omega<2.37286$ and is due to Alman and Williams \cite{AW:ARLM}. 

Another interesting question is to find the exact number of multiplications needed to multiply two $n\times n$ matrices for small numbers $n$. For $n=2$ Strassen provided the upper bound of $7$. Winograd showed that we also need at least $7$ multiplications \cite{Wi:Omo2}. De Groote proved that Strassen's algorithm is unique \cite{dG:Ovoo} modulo a group of equivalence transformations.

For the case $n=3$ Laderman was the first to present a scheme that uses $23$ multiplications \cite{La:Anaf}, which remains the best known upper bound, unless the coefficient domain is commutative \cite{Ro:FCMA}. The currently best lower bound is $19$ and was proved by Bläser \cite{Bl:Otco}. There are many ways to multiply two $3\times 3$ matrices using $23$ multiplications \cite{JM:Nbaf,CBH:ANGM,OKM:Otio,Sm:Tbca,HKS:NWTM,BAD+:Epdo}.

For every newly found algorithm the question arises whether it is really new or it can be mapped to a known solution by one of the transformations described by de Groote. These transformations define an equivalence relation on the set of matrix multiplication algorithms. Some authors used invariants of the action of the transformation group to prove that their newly found schemes are inequivalent to the known algorithms. The works of Berger et al. \cite{BAD+:Epdo} and Heule et al. \cite{HKS:NWTM} provide algorithms to check if two given schemes are equivalent. Berger et al. give an algorithm that can check equivalence over the ground field $\mathbb{R}$ if the schemes fulfill a certain assumption. Heule et al. provide an algorithm to check equivalence over finite fields.

Heule et al. presented over 17,000 schemes for multiplying $3\times 3$ matrices and showed that they are pairwise nonequivalent, at least when viewed over the ground field $\mathbb{Z}_2$. Their collection has since been extended to over 64,000 pairwise inequivalent schemes. For testing whether a newly found scheme is really new, we would need to do an equivalence test for each of these schemes. Due to the large number of schemes this becomes expensive. 

In this paper we propose an algorithm that computes a normal form for the equivalence class of a given scheme over a finite field. If all known schemes already are in normal form, then deciding whether a newly found scheme is equivalent to any of them is reduced to a normal form computation for the new scheme and a cheap syntactic comparison to every old scheme. Although the transformation group over a finite field is finite, it is so large that checking equivalence by computing every transformation is not feasible. Thus, Heule et al. use a strategy that iteratively maps one scheme to another part by part. We use a similar strategy to find a minimal element of an equivalence class. 

\section{Matrix Multiplication Schemes}
Let $K$ ba field and let $\mathbf{A},\mathbf{B} \in K^{n\times n}$. The computation of the matrix product $\mathbf{C}=\mathbf{AB}$ by a Strassen-like algorithm proceeds in two stages. In the first stage we compute some intermediate products $M_1, \ldots, M_r$ of linear combinations of entries of $\mathbf{A}$ and linear combinations of entries of $\mathbf{B}$. In the second stage we compute the entries of $\mathbf{C}$ as linear combinations of the $M_i$. 

For example if $n=2$, we can write

\begin{equation*}
\mathbf{A} = \begin{pmatrix}
   a_{1,1}&a_{1,2}\\
   a_{2,1}&a_{2,2}\\
 \end{pmatrix}
 \quad
 \mathbf{B} = \begin{pmatrix}
   b_{1,1}&b_{1,2}\\
   b_{2,1}&b_{2,2}
 \end{pmatrix}
 \quad\text{and}\quad
 \mathbf{C} = \begin{pmatrix}
   c_{1,1}&c_{1,2}\\
   c_{2,1}&c_{2,2}
 \end{pmatrix}.
\end{equation*}

Strassen's algorithm computes $\mathbf{C}$ in the following way:
\begin{eqnarray*}
	M_1 &=& (a_{1,1} + a_{2,2}) (b_{1,1} + b_{2,2})\\
	M_2 &=& (a_{2,1} + a_{2,2}) (b_{1,1})\\
	M_3 &=& (a_{1,1}) (b_{1,2} - b_{2,2})\\
	M_4 &=& (a_{2,2}) (b_{2,1} - b_{1,1})\\
	M_5 &=& (a_{1,1} + a_{1,2})(b_{2,2})\\
	M_6 &=& (a_{2,1} - a_{1,1}) (b_{1,1}+ b_{1,2})\\
	M_7 &=& (a_{1,2} - a_{2,2}) (b_{2,1} + b_{2,2})
\end{eqnarray*}
\begin{eqnarray*}
	c_{1,1} &=& M_1 + M_4 - M_5 + M_7\\
	c_{1,2} &=& M_3 + M_5\\
	c_{2,1} &=& M_2 + M_4\\
	c_{2,2} &=& M_1 - M_2 + M_3 + M_6.
\end{eqnarray*}

A Strassen-like multiplication algorithm that computes the product of two $n\times n$ matrices using $r$ multiplications has the form

\begin{eqnarray*}
M_1 &=& (\alpha_{1,1}^{(1)}a_{1,1} + \alpha_{1,2}^{(1)}a_{1,2}+\cdots)(\beta_{1,1}^{(1)}b_{1,1} + \beta_{1,2}^{(1)}b_{1,2}+\cdots)\\
&\vdots&\\
M_r &=& (\alpha_{1,1}^{(r)}a_{1,1} + \alpha_{1,2}^{(r)}a_{1,2}+\cdots)(\beta_{1,1}^{(r)}b_{1,1} + \beta_{1,2}^{(r)}b_{1,2}+\cdots)\\
c_{1,1} &=& \gamma_{1,1}^{(1)}M_1 + \gamma_{1,1}^{(2)}M_2 + \cdots + \gamma_{1,1}^{(r)}M_{r}\\
&\vdots&\\
c_{n,n} &=& \gamma_{n,n}^{(1)}M_1 + \gamma_{n,n}^{(2)}M_2 + \cdots + \gamma_{n,n}^{(r)}M_{r}.
\end{eqnarray*}

All the information about such a multiplication scheme is contained in the coefficients $\alpha_{i,j},\beta_{i,j}$ and $\gamma_{i,j}$. We can write these coefficients as a tensor in $K^{n\times n} \otimes K^{n\times n} \otimes K^{n\times n}$: 

\begin{equation}\label{tensor}
	\sum_{l=1}^r ((\alpha_{i,j}))_{i=1,j=1}^{n,n} \otimes ((\beta_{i,j}))_{i=1,j=1}^{n,n} \otimes ((\gamma_{i,j}))_{i=1,j=1}^{n,n}.
\end{equation}

As an element of $K^{n\times n} \otimes K^{n\times n} \otimes K^{n\times n}$ a correct scheme is equal to the matrix multiplication tensor defined by $\sum_{i,j,k = 1}^n E_{i,k}\otimes E_{k,j} \otimes E_{i,j}$ where $E_{u,v}$ is the matrix with $1$ at position $(u,v)$ and zeros everywhere else. Formulas become a bit more symmetric if we look at the tensor $\sum_{i,j,k = 1}^n E_{i,k}\otimes E_{k,j} \otimes E_{j,i}$ corresponding to the product $\mathbf{C}^T = \mathbf{AB}$, so we will consider this tensor instead.

We represent a scheme as a table containing the matrices in this tensor. We will refer to the rows and columns of this table as the rows and columns of a scheme. For example Strassen's algorithm is represented as shown in Table~\ref{strassen}.

\begin{table}
\begin{center}
\footnotesize
\begin{tabular}{c|c|c}
 $\left(
\begin{array}{cc}
 1 & 0 \\
 0 & 1 \\
\end{array}
\right)$ & $\left(
\begin{array}{cc}
 1 & 0 \\
 0 & 1 \\
\end{array}
\right)$ & $\left(
\begin{array}{cc}
 1 & 0 \\
 0 & 1 \\
\end{array}
\right)$ \rule[-3ex]{0pt}{7ex}\\
\hline
 $\left(
\begin{array}{cc}
 0 & 0 \\
 1 & 1 \\
\end{array}
\right)$ & $\left(
\begin{array}{cc}
 1 & 0 \\
 0 & 0 \\
\end{array}
\right)$ & $\left(
\begin{array}{cc}
 0 & 0 \\
 1 & -1 \\
\end{array}
\right)$  \rule[-3ex]{0pt}{7ex}\\
\hline
$\left(
\begin{array}{cc}
 1 & 0 \\
 0 & 0 \\
\end{array}
\right)$ & $\left(
\begin{array}{cc}
 0 & 1 \\
 0 & -1 \\
\end{array}
\right)$ & $\left(
\begin{array}{cc}
 0 & 1 \\
 0 & 1 \\
\end{array}
\right)$\rule[-3ex]{0pt}{7ex}\\
\hline
 $\left(
\begin{array}{cc}
 0 & 0 \\
 0 & 1 \\
\end{array}
\right)$ & $\left(
\begin{array}{cc}
 1 & 0 \\
 -1 & 0 \\
\end{array}
\right)$ & $\left(
\begin{array}{cc}
 1 & 0 \\
 1 & 0 \\
\end{array}
\right)$ \rule[-3ex]{0pt}{7ex}\\
\hline
$\left(
\begin{array}{cc}
 1 & 1 \\
 0 & 0 \\
\end{array}
\right)$ & $\left(
\begin{array}{cc}
 0 & 0 \\
 0 & 1 \\
\end{array}
\right)$ & $\left(
\begin{array}{cc}
 -1 & 1 \\
 0 & 0 \\
\end{array}
\right)$ \rule[-3ex]{0pt}{7ex}\\
\hline
$\left(
\begin{array}{cc}
 -1 & 0 \\
 1 & 0 \\
\end{array}
\right)$ & $\left(
\begin{array}{cc}
 1 & 1 \\
 0 & 0 \\
\end{array}
\right)$ & $\left(
\begin{array}{cc}
 0 & 0 \\
 0 & 1 \\
\end{array}
\right)$ \rule[-3ex]{0pt}{7ex}\\
\hline
$\left(
\begin{array}{cc}
 0 & 1 \\
 0 & -1 \\
\end{array}
\right)$ & $\left(
\begin{array}{cc}
 0 & 0 \\
 1 & 1 \\
\end{array}
\right)$ & $\left(
\begin{array}{cc}
 1 & 0 \\
 0 & 0 \\
\end{array}
\right)$\rule[-3ex]{0pt}{7ex}
\end{tabular}
\end{center}
\hrule
\caption{Strassen's Algorithm}
\label{strassen}
\end{table}

\section{The Symmetry Group}
There are several transformations that map one correct matrix multiplication scheme to another one. We call two schemes equivalent if they can be mapped to each other by one of these transformations. De Groote \cite{dG:Ovoo} first described the transformations and showed that Strassen's algorithm is unique modulo this equivalence. 

The first transformation is permuting the rows of a scheme. This corresponds to just changing the order of the $M_i$'s in the algorithm. 
Another transformation comes from the fact that $\mathbf{AB}=\mathbf{C}^T \Leftrightarrow \mathbf{B}^T \mathbf{A}^T = \mathbf{C}$. It acts on a tensor by transforming a summand $A\otimes B \otimes C$ to $B^T\otimes A^T \otimes C^T$. Moreover, it follows from the condition that the sum (\ref{tensor}) is equal to the matrix multiplication tensor, that also a cyclic permutation of the coefficients $\alpha, \beta$ and $\gamma$ is a symmetry transformation. Taking those together we get an action that is composed by an arbitrary permutation of the columns of a scheme and transposing all the matrices if the permutation is odd.

Finally, we can use that for any invertible matrix $V$ we have $\mathbf{AB}=\mathbf{A}VV^{-1}\mathbf{B}$. The corresponding action on a tensor $A \otimes B \otimes C$ maps it to $AV \otimes V^{-1}B \otimes C$. Since we can permute $A$, $B$ and $C$ we also can insert invertible matrices $U$ and $W$ which results in the action
\begin{equation*}
(U,V,W)*A\otimes B\otimes C = UAV^{-1}\otimes VBW^{-1} \otimes WCU^{-1}.
\end{equation*}
This transformation is called the sandwiching action.

If we combine all these transformations we get the group $G=S_r \times S_3 \ltimes \glt$ of symmetries of $n\times n$ matrix multiplication schemes with $r$ rows. 

\begin{definition}
Let $\phi \colon S_3 \to \operatorname{Aut}(\glt)$ be defined by 
\begin{equation*}
\phi(\pi) = \begin{cases} 
(U,V,W)\mapsto\pi((U,V,W)) &\quad\text{if }\sgn(\pi)=1\\ 
(U,V,W)\mapsto\pi((V^{-T},W^{-T},U^{-T})) &\quad\text{if }\sgn(\pi)=-1
\end{cases}
\end{equation*}
The symmetry group of $n\times n$ matrix multiplication schemes with $r$ rows is defined over the set $G=S_r\times S_3 \times \glt$ with the multiplication given by 
\begin{eqnarray*}
(\sigma_1,\pi_1,(U_1,V_1,W_1))\cdot (\sigma_2,\pi_2,(U_2,V_2,W_2)) = \\
(\sigma_1 \sigma_2,\pi_1 \pi_2,(U_1,V_1,W_1)\phi(\pi_1)((U_2,V_2,W_2))).
\end{eqnarray*}

The action $g*S$ of a group element $g=(\sigma,\pi,U,V,W) \in G$ on a multiplication scheme $S\in (K^{n\times n})^{r\times 3}$ is defined by first letting $\sigma$ permute the rows of $S$ then letting $\pi$ permute the columns of $S$ and transposing every matrix if $\sgn(\pi)=-1$ and finally letting $U,V$ and $W$ act on every row as described above. 
\end{definition}

One can show that this action fulfills the criteria of a group action.

\section{Minimal Orbit Elements}

Two schemes are equivalent if they belong to the same orbit under the action of the group $G$. Our goal in this section is to define a normal form for every orbit. The particular choice of the normal form is partly motivated by implementation convenience and not by any special properties. 
From now on we assume that $K$ is a finite field. Since over a finite field the symmetry group is finite we could decide equivalence or compute a normal form by exhaustive search. However, already for $n=3$ the symmetry group over $\mathbb{Z}_2$ has a size of $23!\cdot 6\cdot 4741632$.

\begin{definition}
	Let $S \in (K^{n\times n})^{r\times 3}$ be a matrix multiplication scheme. The rank pattern of the scheme is defined to be the table \[((\rank S_{i,1},\rank S_{i,2},\rank S_{i,3}))_{i=1}^r\] and the rank vector of a row $(A,B,C)$ to be $(\rank(A),\rank(B),\rank(C))$.
\end{definition}

Since the matrices $U, V$ and $W$ are invertible, the sandwiching action leaves the rank pattern invariant. Transposing the matrices does not change their rank either. Therefore the only way a group element changes the rank pattern of a scheme is by permuting it accordingly. So for two equivalent schemes their rank patterns only differ by a permutation of rows and columns. This allows us to permute the rows and columns of the scheme such that the rank pattern becomes maximal under lexicographic order. 

This maximal rank pattern is a well-known invariant of the symmetry group that has been used to show that two schemes are not equivalent. For example Courtois et al. \cite{CBH:ANGM} and Oh et al. \cite{OKM:Otio} used this test to prove that their schemes were indeed new. However, this method only provides a sufficient condition for the inequivalence of schemes and can not decide equivalence of schemes. In Heule et al.'s data for certain rank patterns there almost 1000 inequivalent schemes having this rank pattern.

We choose the normal form to be an orbit element which has a maximal rank pattern and is minimal under a certain lexicographic order. For doing so fix a total order on $K$ such that $0<1<x$ for $x\in K\setminus\{0,1\}$. The order need not be compatible with $+$ or $\cdot$ in any sense. For the matrices in the schemes we use colexicoraphic order by columns, with columns compared by lexicographic order. This means for two column vectors $v=(x_1, \ldots, x_n)^T$ and $v'=(x_1',\ldots,x_n')^T$ we define recursively
\begin{equation*}
v < v' :\Leftrightarrow x_1<x_1' \vee (x_1 = x_1' \wedge (x_2,\ldots x_{n}) < (x_2',\ldots,x_{n}'))
\end{equation*}
For two matrices $M=(v_1\mid\cdots \mid v_n)$ and $M'=(v_1'\mid\cdots \mid v_n')$ we define
\begin{equation*}
M < M' :\Leftrightarrow v_n<v_n' \vee (v_n = v_n' \wedge (v_1\mid\cdots\mid v_{n-1}) < (v_1'\mid \cdots \mid v_{n-1}'))
\end{equation*}


For ordering the schemes we use the common lexicographic order. So we compare two schemes row by row from top to bottom and in each row we compare the matrices from left to right using the order defined above. 

\begin{definition}
Let $S\in (K^{3\times 3})^{r\times 3}$ be a matrix multiplication scheme. We say $S$ is in normal form if $S = \min\{S' \in G*S \mid \text{the rank pattern of } S' \text{is sorted}\}$, where the minimum is taken with respect to the order defined above.
\end{definition}

Such a normal form clearly exists and it is unique since the group $G$ is finite and the lexicographic order is a total order.

The strategy to compute the normal form is as follows:
We first consider the rank pattern of a scheme and apply row and column permutations that maximize the rank pattern. If there are several column permutations that lead to the same rank pattern, we consider each of them separately, since there are at most six. 

Then we proceed row by row. For all rows that have maximal rank pattern, we determine the minimal element of its orbit under the action of $\glt$. From the definition of the normal form, it follows that the smallest row we can produce this way has to be the first row in the normal form. However, we might be able to reach that row from several different rows and also the choice of $U,V$ and $W$ is in general not unique. 

Apart from the first row of the normal form we also compute the stabilizer of the first row, which is the set of all triples $(U,V,W)\in \glt$ such that $(A,B,C) = (UAV^{-1},VBW^{-1},WCU^{-1})$. For each possible row that can be mapped to the first row we compute the \emph{tail}, by which we mean list of all remaining rows after applying suitable triple $(U,V,W)$. 

We then continue this process iteratively. We go over each tail and determine a row that has maximal rank vector and becomes minimal under the action of the stabilizer. To do this we apply every element of the stabilizer to all possible candidates for the next row. This uniquely determines the next row of the normal form and we get again a list of tails and the stabilizer of the already determined rows.

The full process is listed in Algorithm \ref{alg1}.

\begin{figure}[h]
\begin{algorithm}[H]
		\SetKwInOut{Input}{Input}\SetKwInOut{Output}{Output}
		\Input{A matrix multiplication scheme $s$}
		\Output{An equivalent scheme in normal form}
		$P:= \{\pi*s \mid \pi \in S_3 \wedge \textit{ there is a row permutation of }\pi*s\textit{ with maximal rank pattern}\}$\\
		$n:=s$\\
		\For{$s' \in P$}{
			$\mathit{candidate}:=()$\\
			$\mathit{tails} = \{s'\}$\\
			$\mathit{stab} = \glt$\\
			\While{$tails\neq \{()\}$}{
				$min := 1$\\
				\For{$t\in \mathit{tails}$}{
					\For{$r \in t$ with maximal rank vector}{	
						$g := \argmin_{g\in \mathit{stab}}g*r$\\
						\If{$g*r < min$}{
							$min:= g*r$\\
							$\mathit{newtails}:=\{\}$;
						}
						\If{$g*r = min$}{
							$\mathit{newtails} := \mathit{newtails} \cup (g*t\setminus \{g*r\})$
						}
					}	
				}
				$\mathit{stab} := \{g\in \mathit{stab}\mid g*min=min\}$\\
				$\mathit{tails} := \mathit{newtails}$\\
				append $min$ to $\mathit{candidate}$
			}
			\If{$\mathit{candidate}<n$}{
				$n:=\mathit{candidate}$			
			}
		}
		\Return $n$		
\end{algorithm}
\hrule
\caption{Normal Form Computation}\label{alg1}
\end{figure}

\begin{proposition}
Algorithm 1 terminates and is correct.
\end{proposition}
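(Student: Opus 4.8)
The plan is to prove termination and correctness separately, with correctness being the substantial part. For termination, I would argue that every loop in the algorithm ranges over a finite set. The set $P$ is finite since $|S_3| = 6$; for each $s' \in P$ the \texttt{while} loop runs at most $r$ times because each iteration appends exactly one row to \texttt{candidate} and there are only $r$ rows in a scheme; the inner \texttt{for} loops range over the finite sets \texttt{tails} and (rows of $t$); and the \texttt{argmin} over \texttt{stab} terminates because $\glt$ is finite over a finite field $K$. It remains only to check that \texttt{tails} itself stays finite throughout, which follows because at each stage each tail is obtained from a previous tail by applying a single group element and deleting one row, so $|\mathit{newtails}| \le \sum_{t \in \mathit{tails}}(\text{number of max-rank-vector rows in } t) \le |\mathit{tails}|\cdot r$.

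For correctness I would show that after processing all of $P$, the returned scheme $n$ equals $\min\{S' \in G*s \mid \text{rank pattern of } S' \text{ sorted}\}$. First I would record the key reduction, already justified in the text before the definition of the normal form: any $g = (\sigma,\pi,U,V,W)$ that sends $s$ to a scheme with sorted rank pattern must, via $\pi$ and $\sigma$, be accounted for by first choosing one of the (at most six) column images $\pi * s$ collected in $P$ and then, within the remaining freedom, choosing row permutations and sandwiching triples. So the overall minimum is $\min_{s' \in P} \mu(s')$, where $\mu(s')$ is the lexicographically least scheme reachable from $s'$ using row permutations and triples from $\glt$ subject to the rank pattern staying sorted — exactly what the body of the outer \texttt{for} loop is meant to compute. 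I would then prove by induction on the iteration count $k$ of the \texttt{while} loop the loop invariant: (i) \texttt{candidate} equals the first $k$ rows of $\mu(s')$; (ii) \texttt{stab} is the stabilizer in $\glt$ of the scheme consisting of those first $k$ rows, i.e. the set of triples acting trivially on all of them simultaneously; and (iii) \texttt{tails} is precisely the set of all tails $g * t \setminus \{g*r\}$, ranging over representatives $t$ of the remaining rows and group elements $g$, that arise when the first $k$ rows of $\mu(s')$ have already been produced — equivalently, \texttt{tails} is the set of lists of the remaining $r-k$ rows over all ways the first $k$ rows of the normal form could have been realized. The base case $k=0$ is immediate: \texttt{candidate} is empty, \texttt{stab} $= \glt$, \texttt{tails} $= \{s'\}$.

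For the induction step, given the invariant after $k$ iterations, the $(k{+}1)$-st iteration selects, over all tails $t$ and all rows $r \in t$ of maximal rank vector, the value $\min_{g \in \mathit{stab}} g*r$ and takes the global minimum $\mathit{min}$ of these. Because the normal form's $(k{+}1)$-st row must have maximal rank vector among the remaining rows (the rank pattern is sorted, so within each block of equal rank the algorithm is free to reorder, and a row of smaller rank vector can never precede one of larger rank vector in a sorted pattern), and because the allowed transformations on the remaining rows are exactly composition of a row permutation with an element of \texttt{stab}, the lexicographically least achievable next row is precisely this $\mathit{min}$; this establishes (i). Part (ii) follows since the stabilizer of the first $k{+}1$ rows is $\{g \in \mathit{stab} \mid g * \mathit{min} = \mathit{min}\}$, which is what the algorithm assigns. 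Part (iii) follows because \texttt{newtails} is accumulated over exactly those $(t, r, g)$ with $g*r = \mathit{min}$, i.e. over every way of realizing the new first row, and for each such choice the new tail is $g*t \setminus \{g*r\}$, matching the description in (iii). When the \texttt{while} loop exits we have $\mathit{tails} = \{()\}$, so $k = r$ and \texttt{candidate} $= \mu(s')$ by (i). Finally the outer \texttt{for} loop takes the minimum of $\mu(s')$ over $s' \in P$, and by the reduction above this equals the normal form of $s$, so $n$ is returned correctly.

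The main obstacle I anticipate is making precise the claim that the normal form's next row always has maximal rank vector among the as-yet-unplaced rows and that no transformation outside ``row permutation composed with current stabilizer'' can improve it — i.e. that restricting attention to the candidates and group elements the algorithm actually enumerates loses nothing. This requires carefully disentangling the semidirect product structure of $G = S_r \times S_3 \ltimes \glt$: one must check that once the column permutation $\pi$ is fixed (handled by $P$) and the first $k$ rows of the target are fixed, the residual symmetries acting on the remaining rows are genuinely only row permutations together with triples stabilizing those first $k$ rows, with no leftover interaction between the $S_r$ part and the $\glt$ part. Verifying this compatibility — essentially that the sandwiching action commutes appropriately with row permutations — is the delicate bookkeeping at the heart of the argument.
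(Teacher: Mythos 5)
Your termination argument and your reduction to the at most six column images collected in $P$ are fine and are at the same level of detail as the paper, and parts (i)--(ii) of your invariant are reasonable. The genuine gap is part (iii) and the way you use it. You justify (iii) by saying that $\mathit{newtails}$ is ``accumulated over exactly those $(t,r,g)$ with $g*r=\mathit{min}$, i.e.\ over every way of realizing the new first row.'' That misreads the algorithm: line~11 computes a \emph{single} element $g:=\argmin_{g\in\mathit{stab}}g*r$ for each pair $(t,r)$, so $\mathit{newtails}$ receives only one transformed tail $g*t\setminus\{g*r\}$ per such pair. Two different minimizers $g,g''$ with $g*r=g''*r=\mathit{min}$ can produce different tails, and the one coming from $g''$ is simply never stored. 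Hence $\mathit{tails}$ is in general a proper subset of ``all ways the first $k$ rows of the normal form could have been realized,'' and your invariant (iii) is not maintained by the algorithm as written.

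This is not a cosmetic slip, because your induction step for (i) in the next iteration rests on (iii): you need the minimum taken over the \emph{stored} tails and $\mathit{stab}$ to coincide with the minimum over all genuine continuations of the partial normal form. To close the gap one must show that nothing is lost by keeping a single representative per $(t,r)$: if $g'\in\mathit{stab}$ maps a stored $t$ to a permutation of the remaining rows of the normal form and $g$ is the minimizer the algorithm actually chose, then (since $\mathit{stab}$ is a group) $g'g^{-1}$ lies in the \emph{updated} stabilizer and maps the stored tail $g*t\setminus\{g*r\}$ to $g'*t\setminus\{g'*r\}$. This stabilizer-transfer argument is exactly the step you flag at the end as an anticipated obstacle but never carry out; the paper does it explicitly, working with the weaker, existential invariant (``there exist $g\in\mathit{stab}$ and $t\in\mathit{tails}$ such that $g*t$ is a permutation of the remaining rows of the normal form''), which --- unlike your (iii) --- is both actually maintained by the algorithm and sufficient for correctness. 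Replace (iii) by that existential statement and add the $g'g^{-1}$ argument, and your proof goes through.
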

\begin{proof}
The termination of the algorithm is guaranteed, since in line~12 the new candidates contain one row less than in the previous step, so eventually the list of tails only contains empty elements.

To prove correctness we first note that the choice of $P$ ensures that it contains a scheme that can be mapped to its normal form without applying further column permutations. From now on we only consider the iteration of the loop in line~3 where $s'$ is this scheme. 

It remains to show that after lines~4 to 19 the candidate is in normal form. To this end we prove the following loop invariant for the while loop: \textit{candidate} is equal to the first rows of the normal form and there is a $g\in\mathit{stab}$ and a $t \in \mathit{tails}$ such that $g*t$ is a permutation of the remaining rows of the normal form.

The lines~4,~5 and 6 ensure that the loop invariant is true at the start of the loop. We now assume that the loop invariant holds at the beginning of an iteration and prove that it is still true after the iteration. Since we know that there are $g\in\mathit{stab}$ and $t \in \mathit{tails}$ such that $g*t$ contains the next row of the normal form and the rank vector is invariant under the group action the lines~9 and 10 will at some point select an $r$ that can be mapped to the next row of the normal form. 

Since the normal form is the lexicographically smallest scheme in its equivalence class, the next row must always be the smallest row that has not been added to $\mathit{candidate}$ yet. Therefore by choosing $g$ such that $g*r$ is minimal in line~11 we ensure that $\mathit{min}$ is the next row of the normal form. For every element $t$ of $\mathit{tails}$ that contains a row that can be mapped to the next row of the normal form we add a suitably transformed copy of $t$ with this row removed to $\mathit{newtails}$. Since $\mathit{tails}$ contains at least one element that can be mapped to the remaining rows of the normal form and we discard only elements that cannot have this property, it is guaranteed that $\mathit{newtails}$ still has this property after lines~9 to 16. 

Finally, we have to show that $\mathit{stab}$ still contains a suitable element. Let $t\in \mathit{tails}$ and $g'\in\mathit{stab}$ be such that $g'$ maps $t$ to a permutation of the remaining rows of the normal form. Let $g$ be the element chosen to minimize $r$ in line~11. Since $\mathit{stab}$ is a group it must contain $g'\cdot g^{-1}$.  Moreover, $\mathit{newtails}$ contains $g*t\setminus\{g*r\}$ which is mapped to $g'*t\setminus \{g'*r\}$. Therefore, $g'\cdot g^{-1}$ has the desired property. \qed
\end{proof}

\section{Minimizing the First Row}

Algorithm~\ref{alg1} is more efficient than a naive walk through the whole symmetry group $G$ because we can expect the stabilizer to become small during the computation. However, in the first iteration we still go over the full group $\glt$. In this section we describe how this can be avoided.

The order we have chosen ensures that the first row has a particular form.

\begin{proposition} \label{form}
Let $G=\glt$ and let $(A,B,C)\in (K^{n\times n})^3$ be such that $(A,B,C)$ is the minimal element of $G*(A,B,C)$. Then the following hold:
\begin{enumerate}
\item $A$ has the form 
\begin{equation}
\label{subdiagonal}
\begin{pmatrix}
0 & 0\\
I_r & 0
\end{pmatrix}
\end{equation}

where $r=\rank A$.
%
In particular, if $\rank A = n$, then $A=I_n$.

\item $B$ is in column echelon form.  
\end{enumerate}
\end{proposition}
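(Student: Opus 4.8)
The plan is to handle the two claims separately, each time playing the minimality of $(A,B,C)$ against a well-chosen family of transformations.

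For (1), note that the first component of $(U,V,W)*(A,B,C)$ is $UAV^{-1}$, and as $U,V$ range over $\gl$ this runs through \emph{every} matrix of rank $r=\rank A$, since over a field two matrices are equivalent under left and right multiplication by invertibles exactly when they have the same rank. Because the order on triples compares the first component first, $A$ must therefore be the $<$-smallest matrix of rank $r$ (it is one such matrix and is $\le$ all of them). I would then read off this smallest matrix from the definitions: the column order compares the rightmost column first and orders columns lexicographically from the top with $0<1<x$, so the all-zero column is the smallest column; hence the minimal rank-$r$ matrix pushes its $n-r$ zero columns (at most $n-r$ by the rank constraint, and exactly that many are achievable) into positions $r+1,\dots,n$, and its first $r$ columns must be linearly independent. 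Minimizing those columns greedily from right to left — at step $k$ the only constraint is that the new column not lie in the span of the already-placed columns $k+1,\dots,r$, i.e.\ not lie in $\operatorname{span}(e_{n-r+k+1},\dots,e_n)$, whose smallest missing vector is $e_{n-r+k}$ — forces column $k$ to be $e_{n-r+k}$. This is precisely $\begin{pmatrix}0&0\\I_r&0\end{pmatrix}$, and $I_n$ when $r=n$.

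For (2), the decisive observation is that once $A$ has the shape just found, $(I,I,W)*(A,B,C)=(A,BW^{-1},WC)$ for \emph{every} invertible $W$, so this element fixes the first component. Consequently $B\le BW^{-1}$ for all $W\in\gl$, and since $B=BI^{-1}$ is itself of this form, $B$ is the $<$-minimum of $\{\,BW^{-1}\mid W\in\gl\,\}$ — that is, $B$ is minimal among all matrices reachable from it by invertible column operations. From here the echelon shape follows by repeatedly contradicting minimality: a zero column with a nonzero column somewhere to its right can be swapped forward to give a strictly smaller matrix; if there were more than $\rank B$ nonzero columns, one of them would be a linear combination of the others and could be cleared, again strictly decreasing the matrix; and if the topmost nonzero entries of the nonzero columns were not in strictly increasing rows from left to right, a single column swap or a column elimination would strictly lower the matrix in the colexicographic-by-columns order. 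The same style of argument (rescaling a pivot to $1$, eliminating entries lying in a pivot row) shows $B$ is actually in \emph{reduced} column echelon form, which in particular gives the statement.

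The block computations and the verifications that each proposed move really decreases the order are routine once one is careful with the two conventions (rightmost column most significant; within a column, topmost entry most significant; $0<1<x$ on $K$). The only step that takes real thought is isolating which transformations remain when $A$ is already normalized — namely recognizing that the triples $(I,I,W)$ still act and hence that arbitrary column operations on $B$ are still permitted — after which (2) reduces to elementary ``this move makes it smaller'' arguments. It is worth noting, finally, that (1) needs no information about the stabilizer of $A$: it follows solely from $A$ being the minimum over all rank-$r$ matrices.
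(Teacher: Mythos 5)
Your argument is correct, but it takes a genuinely different route from the paper's. The paper proves both parts by induction on $n$: it fixes a reference triple $(A',B',C')$ obtained by Gaussian elimination, compares the last column (and last row) of $A$, respectively $B$, with that of $A'$, respectively $B'$, and peels off a row and a column to invoke the induction hypothesis. You avoid induction entirely. For (1) you use transitivity of the $\gl\times\gl$ action on each rank class: since $\{UAV^{-1}\mid U,V\in\gl\}$ is exactly the set of rank-$r$ matrices and the row order compares the first component first, minimality of the triple forces $A$ to be the global minimum of that class, which you then compute by a right-to-left greedy choice of columns; the one point you should spell out is the feasibility statement behind the greedy step, namely that because $r\le n$ the only constraint on the column being placed is that it avoid the span of the columns already placed (completion to rank exactly $r$ is then always possible), and that the vectors smaller than $e_{n-r+k}$ are precisely those supported on rows $n-r+k+1,\dots,n$. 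For (2) you isolate the subgroup $\{(I,I,W)\mid W\in\gl\}$, which fixes the first component, deduce $B\le BW^{-1}$ for every invertible $W$, and exclude violations of the echelon shape by explicit column moves (swapping a zero column forward, clearing a dependent column, lowering or normalizing a pivot, clearing a pivot row), each of which leaves all columns to the right of the modified one unchanged and strictly decreases the modified column, hence strictly decreases the colexicographic order; these local moves are sound. In fact you prove more than is claimed: your argument identifies the exact minimum, showing $B$ is the \emph{reduced} column echelon form (and pins down $A$ exactly, which is what Algorithm~\ref{alg2} exploits). What the paper's induction buys is a uniform treatment of both parts without having to describe the minimal orbit element explicitly; what your argument buys is a shorter, non-inductive proof with a sharper conclusion.
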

\begin{proof}
Using Gaussian elimination we can find $(A',B',C')=(U,V,W)*(A,B,C)$ where $A'$ and $B'$ are in the described form. To show that $(A,B,C)$ already are in this form we proceed by induction on $n$. If $n=1$, then the claims are true. For the induction step assume that the claims are true for $n-1$. 
\begin{enumerate}
\item We first consider the special case $\rank A = n$. Denote by $v_1,\ldots ,v_n$ the columns of $A$. Since $A\leq A' = I_n$ there are two cases:

Case 1: $v_n < e_n$. Then $v_n = 0$ contradicting the assumption that $A$ has full rank.

Case 2: $v_n = e_n$
. Then the last row of $A$ contains only zeros apart from the $1$ in the bottom right corner. Otherwise we could use column reduction to make $A$ smaller. Since $A$ is minimal, also the matrix we get when we remove the last column and row from $A$ has to be minimal. So by the induction hypothesis $A$ has the desired form. 

Now suppose $\rank A < n$. Since the last column of $A'$ contains only zeros and $A$ is minimal, the last column of $A$ consists only of zeros. We can use row reduction to form a matrix $A''$ that is equivalent to $A$, has a zero row and all other rows equal to those of $A$. So $A''<A$. We then shift the zero row of $A''$ to the top. Since this doesn't make $A''$ bigger, it is still not greater than $A$. Because of the minimality of $A$ its first row has then to be zero as well. Now we can remove the last column and first row of $A$ and the resulting matrix must still be minimal. So by the induction hypothesis $A$ is of the desired form. 

\item Since we already showed $A = A'$ we can assume $U=V=I_n$. So $B'$ is the column echelon form of $B$. 
We write $B$ as $(v_1\mid \cdots \mid v_n)$ and $B'$ as $(v_1'\mid \cdots \mid v_n')$. We again have two cases:

Case 1: $v_n < v_n'$. So $v_n'\neq 0$ and since $B'$ is in column echelon form this implies $v_n' = e_n$. Then $v_n=0$ which contradicts that $B'$ is the column echelon form of $B$.

Case 2: $v_n = v_n'$. Since $B'$ is in column echelon form we either have $v_n=e_n$ or $v_n = 0$. We claim that the matrix we get by removing the last column and row from $B$ is minimal. If not, there is a sequence of column operations that makes that matrix smaller. Let $B''=(v_1''\mid\cdots\mid v_n'')$ be the matrix we get by applying these operations to $B$ and let $i$ be the index of the right most column that was changed. So $v_i''$ with the last element removed must be smaller than $v_i$ with the last element removed. However, this implies that $v_i''<v_i$ and therefore $B''<B$, which is a contradiction. So by the induction hypothesis $B$ with the last row and column removed must be in column echelon form. 

It remains to show that the last rows of $B$ and $B'$ are equal. There must exist a sequence of column operations that turn $B$ into $B'$. If $v_n=e_n=v_n'$, then these operations would eliminate all elements in the last row of $B$, except the one in the bottom right corner. This implies $B'\leq B$ and therefore $B'=B$. If $v_n = 0 = v_n'$, then this sequence cannot change the last row because any column operation not involving the last column would destroy the column echelon form in the upper left part. Therefore $B=B'$. \qed
\end{enumerate}
\end{proof}

\begin{figure}
\begin{algorithm}[H]
		\SetKwInOut{Input}{Input}\SetKwInOut{Output}{Output}
		\Input{A triple of $n \times n$ matrices $(A,B,C)$}
		\Output{A minimal triple equivalent under the action of $\glt$}
		
		\If{$\rank A = n$}{
			$A_1:=I_n$\\
			$C':=CA$\\
			\If{$\rank B = n$}{
				$B_1:=I_n$\\
				$C_1:=\min_{W\in \gl} W B^{-1}C'W^{-1}$
			}
			\Else{
				$B_1:=\min_{V,W\in \gl} VBW^{-1}$\\
				$S := \{(V,W) \mid V,W\in \gl \wedge VBW^{-1}=B_1 \}$\\
				$C_1:=\min_{(V,W)\in S} WC'V^{-1}$
			}
		}
		\Else{
			$A_1:=\min_{U,V\in \gl} UAV^{-1}$\\
			$(U,V):=\argmin_{U,V\in \gl} UAV^{-1}$\\
			$B' = VB; C' = CU^{-1}$\\
			$S := \{(U,V) \mid U,V\in \gl \wedge UA_1V^{-1}=A_1 \}$\\
			$B_1:=\min_{(U,V)\in S,W\in \gl} VB'W^{-1}$\\
			$C'' = WC'$, where $W$ is chosen as in the line above\\
			$S' := \{(U,V,W)\in\glt \mid UA_1V^{-1}=A_1 \wedge VB_1W^{-1}=B_1\}$\\
			$C_1:=\min_{(U,V,W)\in S'} WC''V^{-1}$
		}
\end{algorithm}
\hrule
\caption{Special treatment of first row}
\label{alg2}
\end{figure}

Let $S\in (K^{n\times n})^{r\times 3}$ be a matrix multiplication scheme. Denote by $(U,V,W)$ the element of $\glt$ used to transform $S$ into normal form and denote by $(A_1,B_1,C_1)$ the first row of the normal form of $S$. Let $(A,B,C)$ be the row that is mapped to $(A_1,B_1,C_1)$ and assume that the columns of $S$ do not need to be permuted. 

Then $(A,B,C)$ must have a maximal rank vector. Therefore, $A$ has the maximal rank of all the matrices in the scheme. So if the scheme contains a matrix of full rank then $A$ has full rank. Moreover, $A_1$ is the minimal element equivalent to $A$ under the action of $\glt$. 

If $A$ has full rank, then $A_1=I_n$ by Proposition~\ref{form}. So we consider the scheme $S'=(A^{-1},I_n,I_n)*S$ instead and update $A,B,C$ and $U,V,W$ accordingly. Then $A=A_1=I_n$ and therefore $U=V$. Since there is no restriction on the choice of $V$ it follows from the first part of Proposition~\ref{form} that $B_1$ must be of the form~(\ref{subdiagonal}). 

If $B$ also has full rank, then we set $S''=(B^{-1},B^{-1},I_n)*S'$ and adjust $A,B,C$ and $U,V,W$ again. So we have $B=B_1=I_n$ and $U=V=W$. Now we can determine $C_1$ and the stabilizer of the first row by iterating over $\gl$ and minimizing $WCW^{-1}$.

If $B$ does not have full rank, we determine all invertible matrices $V$ and $W$ such that $VBW^{-1}=B_1$. This can be done by solving the linear system $VB = WB_1$ and discarding all singular solutions. Since $U=V$ we go through all possibilities for $V$ and $W$ and minimize $WCV^{-1}$. This allows us to determine $C_1$ and the stabilizer of the first row.

If $A$ does not have full rank, we solve the linear system $UA=VA_1$ and discard all singular solutions. The remaining solutions are the possible choices for $U$ and $V$ such that $UAV^{-1}=A_1$. By Proposition~\ref{form} $B_1$ must be in column echelon form. So for all possible choices of $U$ and $V$ we determine $W$ such that $VBW^{-1}$ is in column echelon form. The smallest matrix $VBW^{-1}$ constructed this way must be equal to $B_1$. Then we go over all such triples $(U,V,W)$ that map $B$ to $B_1$ and determine those that minimize $WCU^{-1}$. So we find $C_1$ and the stabilizer of the first row.

The process is summarized in Algorithm~\ref{alg2}.

\section{Timings}

For a comparison we have tested the equivalence check of Heule et al. on 10,000 randomly selected pairs from Heule et al.'s data set of $3\times 3$ multiplication schemes and computed the normal form of 10,000 randomly selected schemes. Checking equivalence of two schemes took on average $0.0092$ seconds. Computing a normal form took on average $1.87$ seconds. Thus, in our application checking equivalence of a single new scheme against a set of known schemes in normal form is faster than directly checking equivalence as soon as we have at least 204 schemes. 

\bibliography{normalforms}

\begin{thebibliography}{10}
\providecommand{\url}[1]{\texttt{#1}}
\providecommand{\urlprefix}{URL }
\providecommand{\doi}[1]{https://doi.org/#1}

\bibitem{AW:ARLM}
Alman, J., Williams, V.V.: A Refined Laser Method and Faster Matrix
  Multiplication, pp. 522--539. \doi{10.1137/1.9781611976465.32}

\bibitem{BAD+:Epdo}
Berger, G.O., Absil, P.A., De~Lathauwer, L., Jungers, R.M., Van~Barel, M.:
  Equivalent polyadic decompositions of matrix multiplication tensors. J.
  Comput. Appl. Math.  \textbf{406},  Paper No. 113941, 17 (2022).
  \doi{10.1016/j.cam.2021.113941}

\bibitem{Bl:Otco}
Bl\"{a}ser, M.: On the complexity of the multiplication of matrices of small
  formats. J. Complexity  \textbf{19}(1),  43--60 (2003).
  \doi{10.1016/S0885-064X(02)00007-9}

\bibitem{CBH:ANGM}
Courtois, N.T., Bard, G.V., Hulme, D.: A new general-purpose method to multiply
  3x3 matrices using only 23 multiplications (2011).
  \doi{10.48550/ARXIV.1108.2830}, \url{https://arxiv.org/abs/1108.2830}

\bibitem{dG:Ovoo}
{de Groote}, H.F.: On varieties of optimal algorithms for the computation of
  bilinear mappings ii. optimal algorithms for 2 × 2-matrix multiplication.
  Theoretical Computer Science  \textbf{7}(2),  127--148 (1978).
  \doi{https://doi.org/10.1016/0304-3975(78)90045-2}

\bibitem{HKS:NWTM}
Heule, M.J.H., Kauers, M., Seidl, M.: New ways to multiply {$3 \times
  3$}-matrices. J. Symbolic Comput.  \textbf{104},  899--916 (2021).
  \doi{10.1016/j.jsc.2020.10.003}

\bibitem{JM:Nbaf}
Johnson, R.W., McLoughlin, A.M.: Noncommutative bilinear algorithms for
  {$3\times 3$} matrix multiplication. SIAM J. Comput.  \textbf{15}(2),
  595--603 (1986). \doi{10.1137/0215043}

\bibitem{La:Anaf}
Laderman, J.D.: A noncommutative algorithm for multiplying {$3\times 3$}
  matrices using {$23$} multiplications. Bull. Amer. Math. Soc.
  \textbf{82}(1),  126--128 (1976). \doi{10.1090/S0002-9904-1976-13988-2}

\bibitem{OKM:Otio}
Oh, J., Kim, J., Moon, B.R.: On the inequivalence of bilinear algorithms for
  {$3\times 3$} matrix multiplication. Inform. Process. Lett.
  \textbf{113}(17),  640--645 (2013). \doi{10.1016/j.ipl.2013.05.011}

\bibitem{Ro:FCMA}
Rosowski, A.: Fast commutative matrix algorithm (2019).
  \doi{10.48550/ARXIV.1904.07683}

\bibitem{Sm:Tbca}
Smirnov, A.V.: The bilinear complexity and practical algorithms for matrix
  multiplication. Zh. Vychisl. Mat. Mat. Fiz.  \textbf{53}(12),  1970--1984
  (2013). \doi{10.1134/S0965542513120129}

\bibitem{St:Gein}
Strassen, V.: Gaussian elimination is not optimal. Numer. Math.  \textbf{13},
  354--356 (1969). \doi{10.1007/BF02165411}

\bibitem{Wi:Omo2}
Winograd, S.: On multiplication of 2 × 2 matrices. Linear Algebra and its
  Applications  \textbf{4}(4),  381--388 (1971).
  \doi{https://doi.org/10.1016/0024-3795(71)90009-7}

\end{thebibliography}

\end{document}